\DeclareFontFamily{OT1}{pzc}{}
\DeclareFontShape{OT1}{pzc}{m}{it}{<-> s * [1.10] pzcmi7t}{}
\DeclareMathAlphabet{\mathpzc}{OT1}{pzc}{m}{it}
\newcommand{\spc}{\quad \quad \quad}
\def\be{\begin{equation}}
\def\ee{\end{equation}}
\def\beq{\begin{eqnarray}}
\def\eeq{\end{eqnarray}}
\theoremstyle{definition}
\theoremstyle{theorem}
\newtheorem{theorem}{Theorem}
\begin{document}
\title{Plasma oscillations within Israel-Stewart theory}
\author{L.~Gavassino}
\affiliation{Department of Applied Mathematics and Theoretical Physics, University of Cambridge, Wilberforce Road, Cambridge CB3 0WA, United Kingdom}

\begin{abstract}
It is well known that, at zero wavenumber, the non-hydrodynamic frequencies of uncharged kinetic theory are purely imaginary. On the other hand, it was recently shown that, in resistive magnetohydrodynamics, the interplay between the Israel-Stewart relaxation equation and the Amp\`{e}re-Maxwell law can give rise to a pair of oscillating non-hydrodynamic modes. In this work, we analyze this phenomenon in detail. We first demonstrate that these oscillatory modes are exact solutions of the Drude model, corresponding to ordinary plasma oscillations. We then invoke the Onsager–Casimir principle to explain that their oscillatory nature reflects the distinct PT-transformation properties of the degrees of freedom: the distribution function is even, while the electric field is odd. Finally, we establish that, in a kinetic theory of charged particles, there can be at most one such pair of oscillatory modes per spatial dimension, while all other modes still must sit on the imaginary axis.
\end{abstract} 
\maketitle

\section{Introduction}
\vspace{-0.3cm}

It has been firmly established that ensuring causality in a hydrodynamic theory necessitates the inclusion of non-hydrodynamic modes \cite{HellerBounds2022ejw,GavassinoDisperisons2023mad}, i.e. of additional excitations whose dispersion relation $\omega(k)$ is gapped at low momentum\footnote{The term \textit{non-hydrodynamic} comes from the fact that ordinary Navier-Stokes hydrodynamics has no gapped modes. In fact, the Navier-Stokes equations of motion are all conservation laws of the form $\partial_t \Psi{=}-\partial_j F^j(\Psi,\partial_l \Psi)$, which gives $\omega=k f(k)$, and thus  $\omega(k{=}0)\equiv 0$.}. From a mathematical standpoint, this requirement is completely analogous to the necessity of allowing for antiparticle modes within causal classical field theories \cite{GavassinoDisperisons2023mad}. As a result, substantial recent effort has been directed toward characterizing the distinctive properties of the non-hydrodynamic spectrum in realistic microscopic theories \cite{KovtunHolography2005,Denicol_Relaxation_2011,Moore:2018mma,Andrade:2019zey,BAGGIOLI20201,Wagner:2023jgq,GavassinoGapless:2024rck,Brants:2024wrx,RochaBranchcut:2024cge,AhnBaggioli:2025odk,Hernandez:2025zxw,Bajec:2025dqm}, and toward establishing systematic strategies for incorporating these properties into (quasi)hydrodynamic models \cite{Israel_Stewart_1979,Denicol2012Boltzmann,Heller2014,Grozdanov2019,BulkGavassino,GavassinoNonHydro2022,GavassinoGENERIC:2022isg,GavassinoBurgers:2023eoz,GavassinoUniveraalityI2023odx,Denicol:2024cpj}. The general picture one gets from these studies is as follows (see figure \ref{fig:typical}): for systems described by kinetic theory, the non-hydrodynamic excitations at $k{=}0$ lie on the imaginary axis (they describe pure relaxation), and Israel-Stewart theory offers a reasonable approximation of their behavior. For systems described holographically, instead, the non-hydrodynamic modes appear as conjugate pairs of the form $\omega\,{=}\,{-}ia\,{\pm}\, b$ (they undergo damped harmonic oscillations), which forces us to move outside of the standard Israel-Stewart framework. 

Recently, an intriguing exception to this conventional wisdom has been discovered \cite{Dash:2022xkz}. Specifically, when Ohm’s law ($J=\sigma E$) is modified by adding an Israel–Stewart-type relaxation term characteristic of kinetic theories ($\tau \dot{J}+J=\sigma E$), the resulting magnetohydrodynamic system supports a pair of complex-conjugate non-hydrodynamic modes at $k\,{=}\,0$ (for certain choices of $\tau$ and $\sigma$). In other words, introducing a transient response of the current leads not to relaxation, but to oscillations. This raises some questions. Are these oscillations physical, and do they have a correspondent in metals? How many oscillating modes can exist in a kinetic theory of charged particles? What is the physical mechanism that causes charged excitations to depart from the imaginary axis? Here, we will answer these questions.

We adopt the metric signature $(-,+,+,+)$, and work in natural units $c=\hbar=k_B=1$ and $e^2=4\pi/137$.

\begin{figure}[h!]
    \centering
\includegraphics[width=0.44\linewidth]{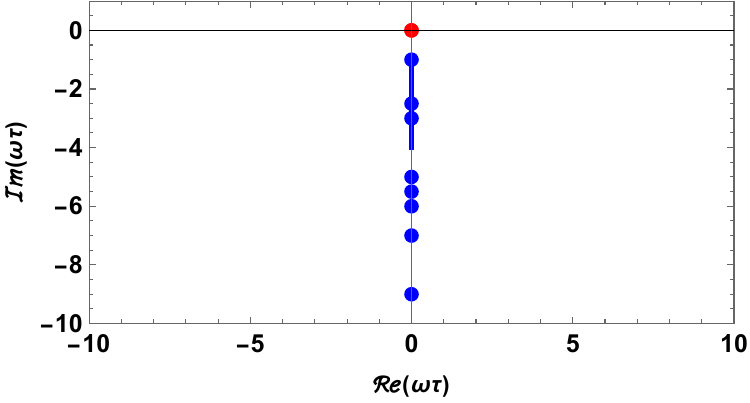}
\includegraphics[width=0.44\linewidth]{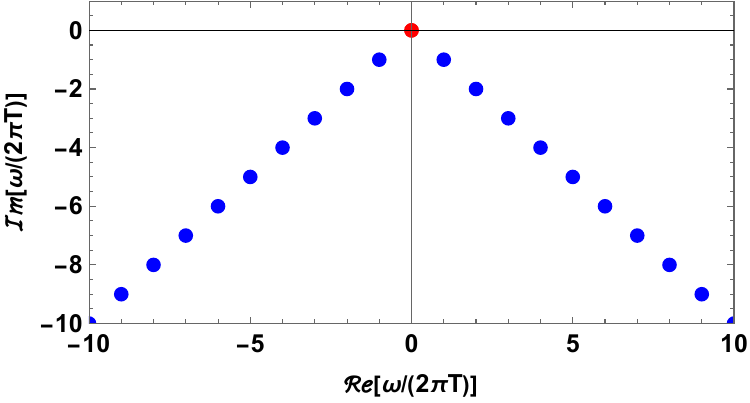}
\caption{Hydrodynamic (red) and non-hydrodynamic (blue) modes of a microscopic theory at $k=0$ according to the conventional wisdom. Left panel: Typical excitation spectrum of a kinetic theory. All the non-hydrodynamic frequencies are scattered along the imaginary axis (some may form a continuum \cite{Moore:2018mma}, thin blue line), and their magnitude is of the order of the particle mean free time $\tau$. Right panel: Typical excitation spectrum of a holographic theory. The non-hydrodynamic relaxation rates $-i\omega$ come in (often regularly spaced) complex conjugate pairs, and their magnitude is of the order of the temperature $T$.}
    \label{fig:typical}
\end{figure}

\newpage
\section{Derivation from kinetic theory}

Our first task is to determine whether the oscillating modes identified by \cite{Dash:2022xkz} represent a limitation of Israel-Stewart theory, or if they correspond to a real physical effect occurring in kinetic theory. We will address this question using a simple Drude-type kinetic model in which the positive charges are fixed, while the electrons constitute a non-relativistic (degenerate) gas.

\subsection{Step-by-step calculation}\label{stepbystep}

We consider an infinite conductor at rest, and we write Maxwell's equations\footnote{Strictly speaking, one should write down the Maxwell equations using the fields in matter, in order to account for bound charges and currents. Here, we ignore those effects for clarity, since they do not introduce any qualitative change.} (in Heaviside-Lorentz units \cite[\S 54]{Srednicki_2007}):
\begin{equation}\label{maxwellone}
\begin{split}
& \partial_j E^j=\rho \, , \\
& \partial_j B^j=0 \, , \\
& (\nabla \times E)^j =-\partial_t B^j \, , \\
& (\nabla \times B)^j =J^j +\partial_t E^j \, , \\
\end{split}
\end{equation}
with $E^j$ and $B^j$ the electric and magnetic field, while $\rho$ and $J^j$ are the charge density and current of the conductor.
Given that we are interested in the behavior of the conductor at vanishing wavenumber, we can impose $\partial_j = 0$, which immediately implies $\rho = 0$ (charge neutrality) and $\partial_t B^j = 0$. The latter equation tells us that the magnetic field is uniform both in space and in time, and so we can set $B^j = 0$, for convenience. Therefore, the only relevant Maxwell equation left is the Amp\`{e}re-Maxwell law:
\begin{equation}\label{amperemaxwell}
\partial_t E^j=-J^j \, .
\end{equation}
In Drude's model, the electric current is given by the kinetic formula
\begin{equation}
J^j =-e \int \dfrac{2d^3 p}{(2\pi)^3} \,  f \, \dfrac{p^j}{m} \, ,
\end{equation}
where $e>0$ and $m>0$ are the fundamental charge unit and the electron mass, while $f=f(t,p^k)\in [0,1]$ is the electron distribution function, i.e. the average occupation number of the single-electron state with momentum $p^k$. Under our aforementioned assumptions (i.e. $\partial_j=B^j=0$), the function $f(t,p^k)$ obeys the Boltzmann-Vlasov equation
\begin{equation}\label{boltzmannVlasov}
\dfrac{\partial f}{\partial t}-eE^k \dfrac{\partial f}{\partial p^k} =\dfrac{\langle f\rangle -f}{\tau} \, ,
\end{equation}
with $\tau$ the electron mean free time, which is assumed to be momentum-independent for simplicity. The isotropic function $\langle f\rangle=\langle f\rangle (p^k p_k)$ is the average of $f$ over all angles. This choice of collision integral corresponds to assuming that the electrons relax to equilibrium thanks to elastic isotropic scattering with the ion lattice.

Now, the key observation is that, if we multiply both sides of \eqref{boltzmannVlasov} by $p^j/m$, and integrate over all momenta, we recover the Israel-Stewart relaxation-type equation as an \textit{exact} kinetic identity: 
\begin{equation}\label{IsraelStewart}
\tau \partial_t J^j+J^j=\sigma E^j \, ,
\end{equation}
with $\sigma=e^2 n_e\tau/m$ the electric conductivity of Drude's model ($n_e=\text{``electron number density''}$). Therefore, by combining \eqref{amperemaxwell} with \eqref{IsraelStewart}, we arrive at the equation of motion for the current that was found by \cite{Dash:2022xkz}:
\begin{equation}\label{plasmafrequency}
\partial^2_t J^j +\dfrac{1}{\tau} \partial_t J^j +\omega_p^2 J^j =0 \, .
\end{equation}
This has the form of a damped harmonic oscillator, whose frequency $\omega_p\equiv \sqrt{\sigma/\tau}=\sqrt{e^2 n_e/m}$ is nothing but the plasma frequency of the medium. If we assume a time dependence $J^j\propto e^{-i\omega t}$, we obtain
\begin{equation}\label{omtau}
2\omega\tau=-i\pm i\sqrt{1-(2\omega_p \tau)^2} \, ,
\end{equation}
which tells that, if $\omega_p\tau>1/2$ (underdamped case), there are indeed two oscillating modes.

In summary: The oscillating non-hydrodynamic modes \eqref{omtau} found by \cite{Dash:2022xkz} are not mere artifacts of the Israel-Stewart approximation. Instead, they are exact solutions of the Boltzmann-Vlasov-Maxwell system \eqref{amperemaxwell}-\eqref{boltzmannVlasov}.

\subsection{Quantitative estimates}

For the non-hydrodynamic frequencies \eqref{omtau} to have a real part (i.e. for the modes to be oscillating), we need that $\omega_p \tau>1/2$. In a typical metal (e.g. gold, or copper \cite{Youn2007ExtendedDrude}) at room temperature, we have $\omega_p\tau \sim 100$ \cite[\S 1]{ashcroft1976solid}, so these modes do oscillate. A similar result holds for white dwarfs and neutron stars, where $\omega_p\tau \sim 10^3$ \cite{YakovlevUrpin1980}. In astrophysical plasmas, $\omega_p \tau$ can range from $10^7$ in the solar corona to $10^{15}$ in accretion disks \cite{NRLFormulary,Draine2011,Sarazin1988,Aschwanden2005,KivelsonRussell1995,Frank2002,NarayanYi1995,Quataert1998}, so also these systems exhibit oscillatory dynamics. In the quark-gluon plasma formed in heavy-ion collisions, the exact value of $\omega_p \tau$ is not known, but the authors of \cite{Dash:2022xkz} assume it to be of order 1. Hence, there may or may not be oscillations, depending on the details of the interactions.

\subsection{Interpretation of the oscillating modes as plasmons}

Given the appearance of the plasma frequency in equation \eqref{plasmafrequency}, one might be tempted to interpret the modes \eqref{omtau} as ordinary plasma oscillations \cite[\S 1.5]{Fitzpatrick2020}, or ``plasmons'' \cite{PinesNozieres1958}. At the same time, this interpretation may seem questionable, since plasmons are typically understood as oscillations of the electron density about charge neutrality, whereas our modes maintain $\rho=0$ everywhere. In what follows, we demonstrate that the modes \eqref{omtau} are indeed plasmons, with equation \eqref{plasmafrequency} capturing the bulk behavior of their evolution, as observed away from boundaries.

Let us revisit the conventional plasmon picture \cite[\S 1]{ashcroft1976solid} (see figure \ref{fig:Plasmonmodel}). We consider a finite metallic sample of volume $V$, and we imagine rigidly displacing the electron gas along the $x$-direction by an amount $\Delta x$. This displacement induces a thin surface charge layer, which in turn generates an electric field equal to $E^x = e n_e \Delta x$ (just solve $\partial_x E^x=\rho$). Differentiating this equation with respect to time, assuming that the electron gas moves as a rigid block with time-dependent displacement $\Delta x(t)$, we obtain
\begin{equation}\label{capacitor}
\partial_t E^x = e n_e \Delta \dot{x}=-J^x \, .
\end{equation} 
As can be seen, we have already recovered equation \eqref{amperemaxwell}. Moving on, let us now determine the equation of motion of the baricenter of the electron gas. According to Newton's second law, we have
\begin{equation}\label{Newtonsecoindlaw}
M \Delta \Ddot{x}= F_{\text{friction}}+F_{\text{Coulomb}} .
\end{equation}
For the total mass of the gas, we can take $M=mn_e V$, and for the frictional force, we can take $F_{\text{friction}}=-M\Delta \Dot{x}/\tau$. In the limit of small $\Delta x$, the Coulomb force reads $F_{\text{Coulomb}}=-e E^x n_e V$, and equation \eqref{Newtonsecoindlaw} becomes
\begin{equation}\label{is}
\tau \Dot{J}^x+J^x=\sigma E^x \, .
\end{equation}
We have also recovered \eqref{IsraelStewart}. Combining \eqref{capacitor} and \eqref{is}, we are back to \eqref{plasmafrequency}.

Our calculation reveals a philosophically elegant point: The bulk behavior \eqref{plasmafrequency} is the same whether we view the electric field as produced by a surface charge (via Gauss' law) or, in an infinite medium, by a time-varying displacement field (via Amp\`{e}re-Maxwell). This equivalence follows from the fact that the time derivative of Gauss's law is identical to the divergence of the Amp\`{e}re-Maxwell law when we account for charge conservation. We can also view this equivalence from a causality perspective. An observer deep inside a large conductor cannot know whether the system is finite (with surface charge) or infinite (without surface charge), so the fields must behave the same either way.

\begin{figure}[b!]
    \centering
\includegraphics[width=0.4\linewidth]{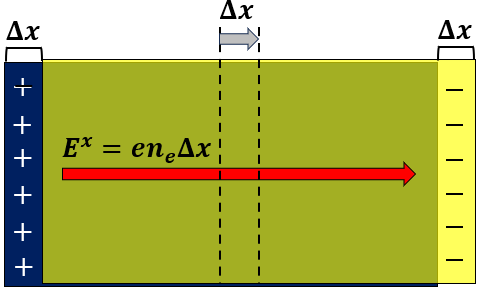}
\caption{A minimal model of plasma oscillations can be constructed by rigidly displacing the electron gas (yellow) in a finite conductor (blue), which produces two oppositely charged layers at the boundaries. The resulting electric field (red arrow) exerts a restoring force on the electrons. Including both the electron inertia and the electron-proton friction leads directly to the equation of a damped harmonic oscillator.}
    \label{fig:Plasmonmodel}
\end{figure}

\section{Onsager analysis}
\vspace{-0.2cm}

Having established that the oscillating modes are physical, we can move on to discuss some more advanced topics. This section is organized as follows. In subsection \ref{iiia}, we will explain why, in a kinetic theory of uncharged particles, all the modes sit on the imaginary axis at $k=0$. In subsection \ref{iiib}, we will explain why the introduction of a coupling with the electric field can give rise to oscillations also at $k=0$. In subsection \ref{iiic}, we will determine how many oscillating non-hydrodynamic modes a generic kinetic theory of charged particles can sustain at $k=0$. We will answer all these questions within the framework of the Onsager-Casimir theory of irreversible processes \cite{Onsager_1931,Onsager_Casimir}.

\vspace{-0.2cm}
\subsection{Why uncharged kinetic theory is purely relaxational}\label{iiia}
\vspace{-0.2cm}

Let us briefly recap the basics of Onsager's theory. The most probable macrostate of a thermodynamic system is the state that minimizes some dimensionless thermodynamic quantity $\Omega$ ($=-\text{``entropy''}$, for isolated systems). Let $\Psi$ be a list of linearised real degrees of freedom that we use to parameterize small fluctuations around this most probable state. Then, the probability for the system to be found in the macrostate $\Psi$ is given by $\mathcal{P}(\Psi)\propto e^{-\Delta\Omega(\Psi)}$. In the Gaussian approximation, $\Delta \Omega$ can be expanded to second order in fluctuations, giving $\Delta \Omega(\Psi)=\Psi^T \mathbb{M}\Psi/2+\mathcal{O}(\Psi^3)$, for some symmetric matrix $\mathbb{M}$. With an appropriate change of basis in the $\Psi-$space (we treat $\Psi$ as a column vector), we can always set $\mathbb{M}=1$, so that
\begin{equation}\label{identif}
\Delta\Omega(\Psi)=\dfrac{\Psi^T\Psi}{2}+\mathcal{O}(\Psi^3) \, .
\end{equation}
Now, it was shown \cite{Onsager_1931,Onsager_Casimir} that, in this particular $\Psi-$basis, if the linearised dynamics can be written in the form
\begin{equation}\label{dotpsiLpsi}
\dot{\Psi} = -\mathcal{L}\Psi ,
\end{equation}
and if both the microscopic theory and the equilibrium state are invariant under a discrete symmetry transformation $\Psi \rightarrow \epsilon\Psi$ that involves time reversal (such as T-, PT-, or CPT-symmetry), with $\epsilon^2=1$, then the real matrix $\mathcal{L}$ fulfills the following symmetry relation \cite{Krommes1993}:
\begin{equation}\label{Onsager}
\mathcal{L} = \epsilon \mathcal{L}^T\epsilon^T \,.
\end{equation}
In Appendix \ref{apponsager}, we provide a quick derivation of this well-known result.

Let us now examine the implications within kinetic theory. We take as our fundamental symmetry $\epsilon$ the combined action of parity and time reversal (``PT-symmetry''), which preserves most interactions \cite[\S 2.6]{weinbergQFT_1995}. Remarkably, this symmetry leaves the distribution function $f(p^k)$ unchanged, since under PT we have $p^k \rightarrow p^k$. Therefore, if we take $\Psi$ to be the list of moments of the distribution function, we immediately obtain $\epsilon = 1$, so that \eqref{Onsager} simplifies to $\mathcal{L} = \mathcal{L}^T$ \cite{GavassinoGapless:2024rck,GavassinoConvergence:2024xwf}. It follows that the spectrum of $\mathcal{L}$ is real. Consequently, solving \eqref{dotpsiLpsi} with the ansatz $\Psi \propto e^{-i\omega t}$,
\begin{equation}
-i\omega \Psi =- \mathcal{L} \Psi \, ,
\end{equation}
can only yield solutions with $\omega \in i\mathbb{R}$, in agreement with the general behavior shown in figure \ref{fig:typical} (left panel). Notably, this same argument applies to all transient fluid theories (such as Israel-Stewart \cite{Israel_Stewart_1979} or divergence-type theories \cite{Liu1986}), whose degrees of freedom (energy density, flow velocity, stresses, etc.) are also moments of the distribution function, all even under PT symmetry, resulting again in purely imaginary non-hydrodynamic frequencies \cite{GavassinoSymmetric2022nff}.

We stress that the identity $\epsilon = 1$ is strictly valid only at $k = 0$. When the distribution function depends on the position, PT symmetry acts as $f(x^k, p^k) \to f(-x^k, p^k)$. To implement the Onsager-Casimir principle in this setting, one can expand $f$ in Fourier sine and cosine modes, for example, $f(x,p^k) = f_s(p^k)\sin(kx) + f_c(p^k)\cos(kx)$. In this decomposition, $f_s$ is PT-odd and $f_c$ is PT-even. Equation \eqref{Onsager} then predicts an antisymmetric coupling between $f_s$ and $f_c$, which enables the frequency of, e.g., sound waves to acquire a real part \cite{GavassinoSymmetric2022nff}.

\vspace{-0.2cm}
\subsection{Why the electric field induces oscillations}\label{iiib}
\vspace{-0.2cm}

Let us now extend our kinetic model by coupling it with the electromagnetic field. The resulting Boltzmann-Vlasov-Maxwell model has degrees of freedom $\Psi_{\text{BVM}}=(\Psi,E^j,B^j)^T$, where $\Psi$ is still the list of moments of the distribution function, which are PT-even, while $E^j$ and $B^j$ are the electric and magnetic fields, which are PT-odd. Thus, we have that $\epsilon=\text{diag}(+1,-1,-1)$, with ``$1$'' being the appropriate identity matrices. Now, electrodynamics is PT-invariant, and so is the state of thermodynamic equilibrium, provided that it carries no average magnetic field. Moreover, since $\Omega$ is PT-even, there can be no cross terms $\Psi E$ and $\Psi B$ in $\Delta \Omega$. The cross term $EB$ is also forbidden because $\Omega$ and $E$ are T-even, while $B$ is T-odd. Hence, we have $\Delta\Omega=\Psi^T \Psi/2+E^j E_j/2+B^j B_j/2=\Psi^T_{\text{BVM}}\Psi_{\text{BVM}}/2$ (in some units), and we can apply the Onsager-Casimir principle to the evolution of $\Psi_{\text{BVM}}$ itself. The result is
\newpage
\begin{equation}
\begin{split}
\Dot{\Psi}={}& -\mathcal{L}\Psi +{\sum}_j\mathcal{J}_j E^j +{\sum}_j \mathcal{K}_j B^j \spc (\text{with }\mathcal{L}=\mathcal{L}^T)\, , \\
\Dot{E}_j ={}&-\mathcal{J}_j^T\Psi -\gamma_{EE} E_j -\gamma_{EB}B_j \, , \\
\Dot{B}_j ={}&-\mathcal{K}_j^T\Psi -\gamma_{EB} E_j -\gamma_{BB}B_j \, . \\
\end{split}
\end{equation}
This is the most general system of equations compatible with the Onsager-Casimir principle. If we further require this system to be consistent with Maxwell's equations \eqref{maxwellone} (in particular, the last two, for $\partial_j=0$), we are left with
\begin{equation}\label{langle}
\begin{split}
\Dot{\Psi}={}& -\mathcal{L}\Psi +{\sum}_j\mathcal{J}_j E^j\spc (\text{with }\mathcal{L}=\mathcal{L}^T)\, , \\
\Dot{E}_j ={}&-\mathcal{J}_j^T\Psi\, \, (\equiv -J^j) \, , \\
\Dot{B}_j ={}& 0 \, . \\
\end{split}
\end{equation}
As can be seen, the magnetic field completely decouples from the other degrees of freedom, and its presence does not affect the particle dynamics (in the linear regime!). By contrast, the electric field does couple with $\Psi$, and this coupling is antisymmetric. It is this antisymmetry that enables oscillatory dynamics. To see this, let us assume $\mathcal{L}\approx 0$, as is indeed the case for metals and astrophysical plasmas when studied on timescales of the order of $\omega_p^{-1}$. Moreover, let us assume that $E^j=(E^1,0,0)$. Then, contracting the first line of \eqref{langle} with $\mathcal{J}^T_1$ on the left, we obtain
\begin{equation}
\dfrac{d}{dt}
\begin{bmatrix}
J^1 \\
E^1 \\
\end{bmatrix}
=
\begin{bmatrix}
0 & \omega_p^2 \\
-1 & 0 \\
\end{bmatrix}
\begin{bmatrix}
J^1 \\
E^1 \\
\end{bmatrix}\, ,
\end{equation}
where we have defined $\omega_p^2\equiv \mathcal{J}_1^T \mathcal{J}_1\geq 0$. The solution is indeed oscillatory:
\begin{equation}
\begin{bmatrix}
J^1(t) \\
E^1(t) \\
\end{bmatrix}
=
\begin{bmatrix}
\cos(\omega_p t) & \omega_p \sin(\omega_p t) \\
-\sin(\omega_p t)/\omega_p & \cos(\omega_p t) \\
\end{bmatrix}
\begin{bmatrix}
J^1(0) \\
E^1(0) \\
\end{bmatrix}\, .
\end{equation}
For completeness, in Appendix \ref{drudeonsager}, we explicitly verify that the Drude model discussed in Subsection \ref{stepbystep} is consistent with the Onsager analysis presented above, i.e. it can be recast in the form \eqref{langle}.

Sidenote: Since transient fluid theories are always compatible with the Onsager principle \cite{GavassinoSymmetric2022nff,GavassinoUniveraalityI2023odx}, the same reasoning above also explains why their magnetohydrodynamic generalizations were found by \cite{Dash:2022xkz} to undergo plasma oscillations.

\subsection{How many plasmons can there be?}\label{iiic}

Beyond offering a thermodynamic perspective on the origin of the oscillatory behavior, the Onsager analysis above also leads to new rigorous predictions. Specifically, let us revisit the Drude model from section \ref{stepbystep}, and replace the relaxation term $(\langle f \rangle {-} f)/\tau$ in \eqref{boltzmannVlasov} with a more realistic collision integral $C[f]$. How does this modification affect the plasma oscillations? Does the system still exhibit a uniquely defined pair of oscillatory modes, or does a level splitting occur, giving rise to multiple distinct plasmon excitations? Let's see.

If the medium is isotropic, there is no loss of generality in assuming that the electric field is aligned with the $x$ axis at all times (see appendix \ref{appisotropy} for a formal proof). Hence, we can drop the space indices in \eqref{langle}, and simply write
\begin{equation}\label{prima}
\begin{split}
\Dot{\Psi}={}& -\mathcal{L}\Psi +\mathcal{J} E\, , \\
\Dot{E} ={}&-\mathcal{J}^T\Psi \, . \\
\end{split}
\end{equation}
Now, in kinetic theory, the vector $\Psi$ is, strictly speaking, infinite-dimensional. In practice, however, the distribution function $f$ cannot possess more independent moments than there are electrons in the system. Therefore, one may regard $\Psi$ as having a very large but still finite dimension $N$ (of the order of $10^{23}$, in metals). Thus, since $\mathcal{L}$ is real symmetric, there exists an orthogonal transformation $\mathcal{O}$ (with $\mathcal{O}\mathcal{O}^T{=}\mathcal{O}^T \mathcal{O}{=}1$) such that
$\mathcal{L}{=}\mathcal{O}^T \Lambda \mathcal{O}$, with $\Lambda$ diagonal. Then, multiplying the first line of \eqref{prima} by $\mathcal{O}$, and adopting the redefinitions 
$\mathcal{O}\Psi\rightarrow \Psi$ and $\mathcal{O}\mathcal{J}\rightarrow \mathcal{J}$, we obtain
\begin{equation}\label{dopo}
\dfrac{d}{dt}
\begin{bmatrix}
\Psi\\
E\\
\end{bmatrix}
=-
\left[\begin{array}{c|c}
\Lambda & -\mathcal{J} \\
\hline
\mathcal{J}^T & 0 \\
\end{array}\right]
\begin{bmatrix}
\Psi\\
E\\
\end{bmatrix}.
\end{equation}
This allows to invoke the following linear-algebra result.
\begin{theorem}
For any $\Lambda{=}\textup{diag}(\sigma_1,\sigma_2,...,\sigma_N){\in}\, \mathbb{R}^{N\times N}$ and $\mathcal{J}{=}(\mathcal{J}^1,\mathcal{J}^2,...,\mathcal{J}^N)^T{\in}\, \mathbb{R}^N$ (with $N{>}0$), the block matrix
\begin{equation}
Q=
\left[\begin{array}{c|c}
\Lambda & -\mathcal{J} \\
\hline
\mathcal{J}^T & 0 \\
\end{array}\right]
\end{equation}
has at most one couple of complex-conjugate eigenvalues.
\end{theorem}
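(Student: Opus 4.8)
The plan is to reduce the spectral problem for the $(N{+}1)$-dimensional matrix $Q$ to the study of the real zeros of a single scalar function, exploiting the fact that the coupling enters only through the non-negative squares $(\mathcal{J}^i)^2$. Writing the eigenvalue equation $Q(\psi,e)^T=\lambda(\psi,e)^T$ componentwise gives $(\Lambda-\lambda)\psi=\mathcal{J}e$ and $\mathcal{J}^T\psi=\lambda e$. For any $\lambda$ differing from all the $\sigma_i$, the first relation yields $\psi=(\Lambda-\lambda)^{-1}\mathcal{J}\,e$, and substitution into the second produces the secular equation
\[
g(\lambda)\,:=\,\lambda+\sum_{i=1}^{N}\frac{(\mathcal{J}^i)^2}{\lambda-\sigma_i}\,=\,0 ,
\]
so that, writing $p(\lambda)=\prod_i(\sigma_i-\lambda)$, the characteristic polynomial of $Q$ equals $-p(\lambda)\,g(\lambda)$. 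The decisive structural feature is that every residue of $g$ is non-negative: $g$ is the sum of $\lambda$ and a Cauchy transform of a positive measure.

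The heart of the argument is then an elementary intermediate-value count. I would list the \emph{distinct} values $\tau_1<\cdots<\tau_M$ assumed by those $\sigma_i$ with $\mathcal{J}^i\neq 0$, each carrying the strictly positive residue $r_a=\sum_{\sigma_i=\tau_a}(\mathcal{J}^i)^2$. Near such a pole $g(\lambda)\sim r_a/(\lambda-\tau_a)$, so $g\to-\infty$ as $\lambda\to\tau_a^-$ and $g\to+\infty$ as $\lambda\to\tau_a^+$. Consequently, on each of the $M-1$ open gaps $(\tau_a,\tau_{a+1})$ the function $g$ descends from $+\infty$ to $-\infty$ and must vanish at least once, giving at least $M-1$ distinct real roots. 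Clearing denominators, the numerator of $g$ has degree $M+1$ and hence exactly $M+1$ roots; so at most two of them can fail to be real, and — being roots of a real polynomial — any such exceptional pair is complex-conjugate.

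It then remains to account for the other eigenvalues. The directions with $\mathcal{J}^i=0$, together with the surplus copies of each repeated $\sigma_i$, supply eigenvectors of the form $(\psi,0)$ with $\Lambda\psi=\sigma_i\psi$ and $\mathcal{J}^T\psi=0$; these furnish precisely the remaining $N-M$ eigenvalues, all real, and they enter $-p\,g$ through a manifestly real polynomial factor that multiplies the degree-$(M{+}1)$ secular numerator. Adding them to the $\ge M-1$ real secular roots yields at least $N-1$ real eigenvalues out of the $N+1$ in total, so at most one complex-conjugate pair can survive, which is the claim.

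I expect the main difficulty to be bookkeeping rather than conceptual. One must confirm that the characteristic polynomial factorizes cleanly into this real factor times the degree-$(M{+}1)$ secular numerator, that coincidences $\sigma_i=\sigma_j$ merely merge two poles into one with summed (still strictly positive) residue, and that the global degree $N+1$ is respected with no miscounting. The positivity of the residues is exactly what pins down the sign of $g$ on either side of each pole; reversing the sign of the off-diagonal coupling (which would render $Q$ symmetric) flips every residue negative, forces an additional real root in each of the two outer intervals $(-\infty,\tau_1)$ and $(\tau_M,+\infty)$, and restores the familiar all-real spectrum. Thus the asymmetry of the off-diagonal block of $Q$ is precisely what permits the single admissible oscillating pair.
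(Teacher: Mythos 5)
Your proof is correct and follows essentially the same route as the paper's: reduce the spectral problem to the secular function $g(\lambda)$, use the positivity of the residues $(\mathcal{J}^i)^2$ together with the intermediate value theorem to locate a real zero in each gap between consecutive poles, and conclude by counting degrees of the characteristic polynomial. The only difference is that you handle the degenerate cases (repeated $\sigma_i$, vanishing $\mathcal{J}^i$) explicitly via pole merging and null eigenvectors of the form $(\psi,0)$, whereas the paper dispatches them by continuity of the spectrum; both are valid.
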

\begin{proof}
In the following, we will work under the assumption that: (a) $\sigma_n\,{\neq}\, \sigma_m$ for $n\,{\neq}\, m$, and (b) $\mathcal{J}^n \,{\neq}\, 0$ for all $n$. The remaining cases can all be recovered by continuity, since the spectrum of a matrix is a continuous function of the entries \cite[\S 2.5.1]{Kato2008}. 

Our task is to set a lower bound for the number of real roots of the characteristic polynomial
\begin{equation}\label{characteristic}
\det \left[\begin{array}{c|c}
\Lambda-\lambda & -\mathcal{J} \\
\hline
\mathcal{J}^T & -\lambda \\
\end{array}\right] = \left(-\lambda +\sum_n \dfrac{(\mathcal{J}^n)^2}{\sigma_n -\lambda} \right) \prod_m (\sigma_m -\lambda)\, .
\end{equation}
To this end, we focus our attention on the function in the round brackets:
\begin{equation}
g(\lambda)= -\lambda +\sum_n \dfrac{(\mathcal{J}^n)^2}{\sigma_n -\lambda} \, .
\end{equation}
This function has exactly $N$ singularities on the real line, located at $\lambda=\sigma_n$ (since all $\sigma_n$ are distinct, and $\mathcal{J}^n\neq 0$). Near each singularity, $g$ diverges to $+\infty$ as $\lambda\to\sigma_n^-$, and to $-\infty$ as $\lambda\to\sigma_n^+$. Consequently, as we travel from one singularity to the next, $g$ goes from negative to positive, and must cross 0 by continuity ($g$ is smooth for $\lambda\neq \sigma_n$). Since the zeros of $g$ are also roots of \eqref{characteristic}, we conclude that the characteristic polynomial has at least $N-1$ real roots, each lying between two consecutive $\sigma_n$’s. Given that the polynomial is of degree $N+1$, only two roots remain unaccounted for, which may in principle be complex.
\end{proof}

This theorem tells us that there are \textit{at most} two oscillatory solutions of \eqref{dopo} (i.e., two plasmons). All other modes are still forced to sit on the imaginary axis (since $i\omega=$ ``eigenvalue of $Q$'').

In figure \ref{fig:examples}, we provide two examples of ``typical'' non-hydrodynamic spectra of kinetic theories coupled with the electric field. To construct these plots, we consider a model with $N{=}10$ (higher $N$ give similar results), and we set
\begin{equation}\label{randomiamo}
\sigma_n =\dfrac{1}{\tau} \times \text{Random}[0,1]\, , \spc \mathcal{J}^n =\dfrac{g}{\tau} \times \text{Random}[-1,1] \, ,
\end{equation}
where $\tau$ is the typical equilibration timescale, and $\text{Random}[a,b]$ is a random variable uniformly distributed over the interval $[a,b]$. The dimensionless number $g$ quantifies the strength of the coupling between the particles and the electric field. For a given realization of the random variables, we let $g$ run from 0 to 1, which allows us to transition from $\omega_p \tau \ll 1/2$ to $\omega_p\tau \gg 1/2$ (since $\omega_p^2\tau^2 =\mathcal{J}^T\mathcal{J}\tau^2 \sim Ng^2$). The resulting behavior is straightforward: when $g=0$, all poles lie on the imaginary axis, and the electric field is conserved ($\partial_t E=0$), corresponding to a pole at $\omega=0$. As $g$ increases, this pole shifts downward until it collides with one of the slowly relaxing modes. Upon merging, the two poles develop a real part and progressively move away from the imaginary axis. This symmetric couple of emerging modes may be interpreted as the plasmon pair. Note that the plasmons are not necessarily the longest-lived non-hydrodynamic modes. There may be some purely relaxational modes that live longer (see figure \ref{fig:examples}, right panel). 

\begin{figure}[b!]
    \centering
\includegraphics[width=0.43\linewidth]{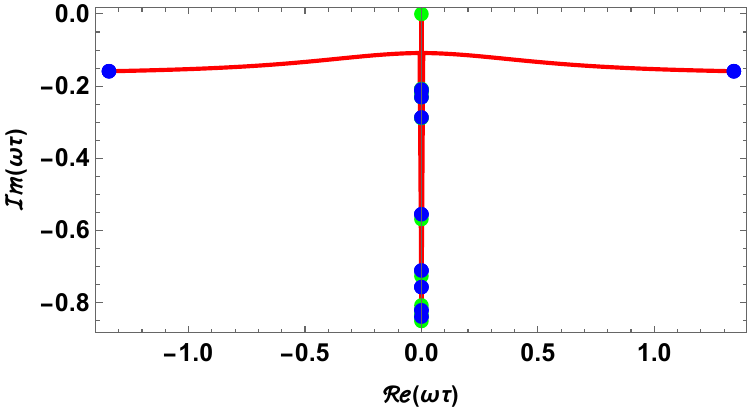}
\includegraphics[width=0.43\linewidth]{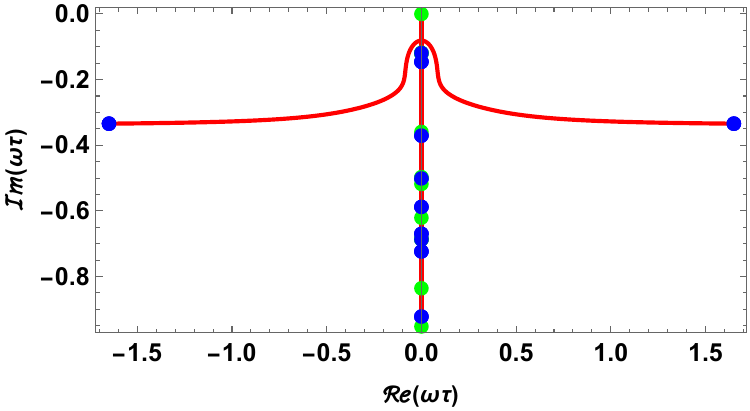}
    \caption{Non-hydrodynamic sector of two Boltzmann-Vlasov-Maxwell models with 10 non-equilibrium degrees of freedom, governed by equation \eqref{dopo}, with $\sigma_n$ and $\mathcal{J}^n$ given by \eqref{randomiamo}. For a given realization of the random variables, we let $g$ run from 0 (green poles) to 1 (blue poles), and we mark the trajectory drawn by the moving poles in red.}
    \label{fig:examples}
\end{figure}





\section{Israel-Stewart theory of electric conduction in metals}
\vspace{-0.3cm}

We have shown that, in the long-wavelength limit, the nonhydrodynamic modes of an electric conductor modeled within Israel-Stewart theory are usually physical, and can be identified with plasma oscillations. At finite wavenumber, however, this correspondence is not guaranteed. In this final section, we examine the behavior of the Israel–Stewart theory for electric conduction in metals (or, more generally, for conductors with immobile positive charges) at finite $k$, highlighting which predictions are robust and which should be treated with caution.

\vspace{-0.4cm}
\subsection{Equations of motion}
\vspace{-0.3cm}

We consider an infinitely large metallic sample at rest, with charge density $\rho(t,x^j)$ and electric current $J^k(t,x^j)$. We assume that bound charges can be neglected and the medium is isotropic. Then, the Israel-Stewart-Maxwell theory of electric conduction is governed by the following equations \cite{Dash:2022xkz}:
\begin{equation}
\begin{split}
& \partial_j E^j=\rho \, ,  \qquad \qquad \qquad \,\, \partial_j B^j=0 \, , \\
& (\nabla \times E)^j =-\partial_t B^j \, , \qquad (\nabla \times B)^j =J^j +\partial_t E^j \, , \\
& \partial_t \rho +\partial_j J^j =0 \, , \quad \quad \quad \quad \! \tau \partial_t J^j +J^j =\sigma(E^j-b^2 \partial^j \rho) \, . \\
\end{split}
\end{equation}
The first two lines are Maxwell's equations. In the third line, we wrote the equation of charge conservation, and the Israel-Stewart relaxation equation for the electric current, accounting for charge diffusion \cite{GavassinoThermoelectric2025bnx}. The diffusion term comes with a prefactor $b^2=d\mu/d\rho$, where $\mu$ is the charge chemical potential. In the following, we will be working in the linear regime, meaning that $\{\tau,\sigma,b\}$ will be treated as background constants.

\vspace{-0.4cm}
\subsection{Magnetic diffusion}
\vspace{-0.3cm}

Electromagnetic perturbations in an isotropic conductor decompose into two independent sectors: transverse modes (characterized by $E^j\perp$ ``wavevector'') and longitudinal modes (characterized by $E^j/ \!\! /$ ``wavevector''). To analyze the transverse sector, we take the curl of the ``curl$\,B$'' Maxwell equation and employ the ``curl$\,E$'' relation, which yields $(\nabla\times J)^k = -(\partial_j \partial^j - \partial^2_t)B^k$. We then take the curl of the Israel-Stewart relaxation equation and recast the result in terms of the magnetic field, thereby obtaining the following partial differential equation:
\begin{equation}\label{transversalIS}
\partial_t B^k =\dfrac{1}{\sigma} (1+\tau \partial_t)(\partial_j \partial^j -\partial^2_t)B^k\, .
\end{equation}
In the slow limit ($\partial_t \rightarrow 0$), the evolution reduces to the standard magnetic-diffusion equation, $\partial_t B^k=\sigma^{-1}\partial_j \partial^j B^k$. In contrast, in the long-wavelength regime ($\partial_j \rightarrow 0$), the dynamics satisfy $(\tau \partial^2_t+\partial_t +\sigma)\partial_t B^k=0$. One branch of the solution space is the trivial mode $B^k=\text{const}$, corresponding to the hydrodynamic excitation. The remaining two branches describe a damped harmonic oscillator with frequency $\omega$ specified by \eqref{omtau}. These nonhydrodynamic excitations represent transverse plasma oscillations: the electric field oscillates in a direction orthogonal to the wavevector, with an amplitude that varies slowly along the wave profile (as implied by $\nabla \times E = -\partial_t B$).

In figure \ref{fig:transwave}, we graph the dispersion relations $\omega(k)$ of \eqref{transversalIS} for both $\omega_p \tau<1/2$ (left panel) $\omega_p\tau>1/2$ (right panel).

\begin{figure}[b!]
    \centering
\includegraphics[width=0.45\linewidth]{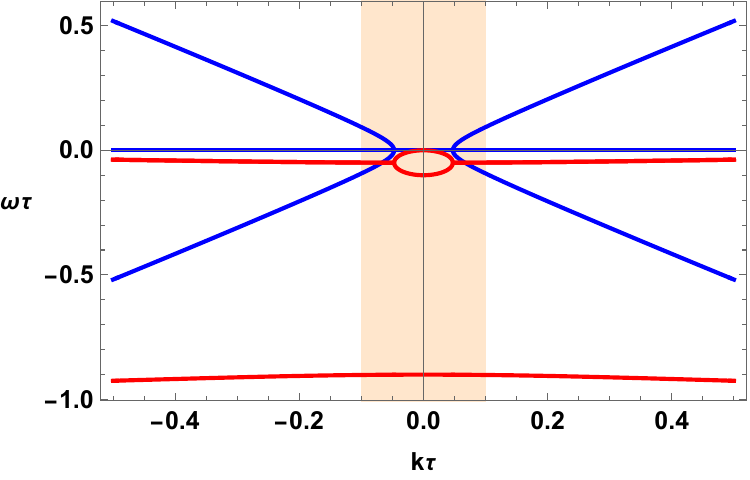}
\includegraphics[width=0.45\linewidth]{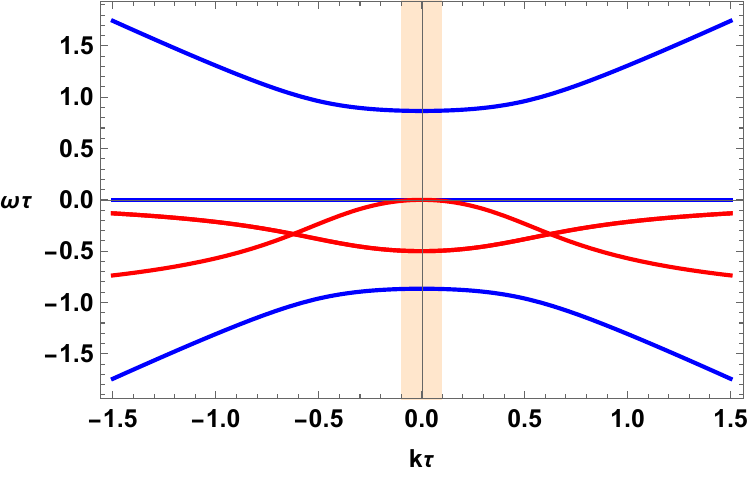}
\caption{Dispersion relations $\omega(k)$ of the transversal modes of a conductor with $\omega_p\tau=0.3$ (left panel) or $1$ (right panel), modeled according to Israel-Stewart theory, see equation \eqref{transversalIS}. The blue lines are the real parts, the red lines are the imaginary parts, and the orange band marks the qualitative regime of validity. Note that there are two relevant scales in \eqref{transversalIS}, namely $\tau$ and $\sigma^{-1}$, but applicability of the theory only requires $k\tau \ll 1$, which is why the regime of validity of the left panel encloses more structure (in the left panel, $\tau \ll \sigma^{-1}$, so the modes with $k\sim \sigma^{-1}$ are physical).}
    \label{fig:transwave}
\end{figure}

\subsection{Debye screening and density equilibration}
\vspace{-0.2cm}

To analyze the longitudinal sector, we take the divergence of the Israel-Stewart relaxation equation. Invoking the conservation of charge and Gauss' law, we obtain the following partial differential equation for the charge density:
\begin{equation}\label{islongitud}
\tau \partial^2_t \rho  +\partial_t \rho +\sigma\rho- \sigma b^2\partial_j  \partial^j \rho=0\, .
\end{equation}
This equation predicts a broad range of physical effects that manifest in diverse settings. For example, in the stationary limit, \eqref{islongitud} reduces to the Debye–Hückel equation, $b^2\partial_j\partial^j\rho=\rho$ \cite[\S 1.6]{bellan_2006}. This relation implies that placing a point charge $Q$ at ${\bf x}{=}0$ inside an infinite conductor produces a screened electric field (with screening length $b$), due to the presence of an induced charge density $\rho(r)=-Q e^{-r/b}/(4\pi b^2 r)$ of opposite sign that surrounds the inserted point charge. The same equation also reproduces the well-known result that an excess charge in a conductor resides near its surface, within a layer of thickness $\sim b$. For example, if a spherical conductor of radius $R$ carries a net charge $Q$, the charge density profile is
\begin{equation}
\rho(r)=\dfrac{Q\sinh(r/b)}{4\pi b r \!\left[R\cosh(R/b)-b\sinh(R/b)\right]} \, .
\end{equation}

If, instead, we restore time-dependence and take the long-wavelength limit ($b\partial_j \to 0$), equation \eqref{islongitud} collapses to the usual damped-oscillator form $\tau \partial_t^2 \rho + \partial_t \rho + \sigma \rho = 0$, whose solutions have frequency $\omega$ given by \eqref{omtau}. These nonhydrodynamic modes describe longitudinal plasma oscillations: the electric field oscillates in a direction parallel to the wavevector, with an amplitude that varies slowly along the propagation direction (as implied by $\partial_j E^j = \rho$).


In figure \ref{fig:longwave}, we graph the dispersion relations $\omega(k)$ of \eqref{islongitud} for both $\omega_p \tau<1/2$ (left panel) $\omega_p\tau>1/2$ (right panel).

\begin{figure}[h!]
    \centering
\includegraphics[width=0.49\linewidth]{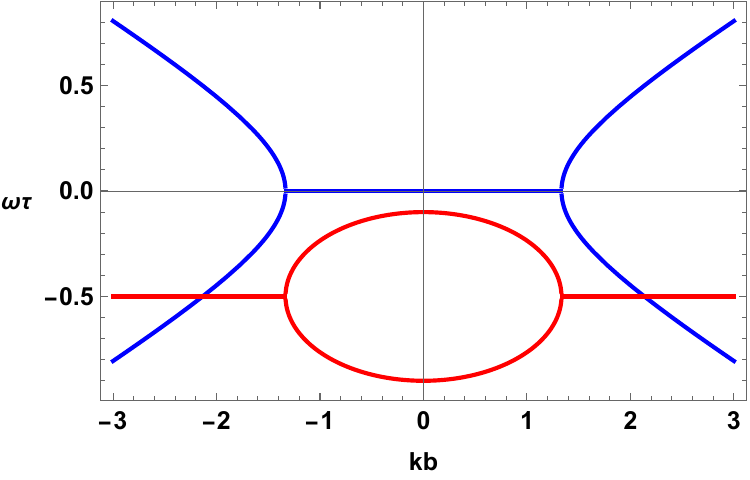}
\includegraphics[width=0.48\linewidth]{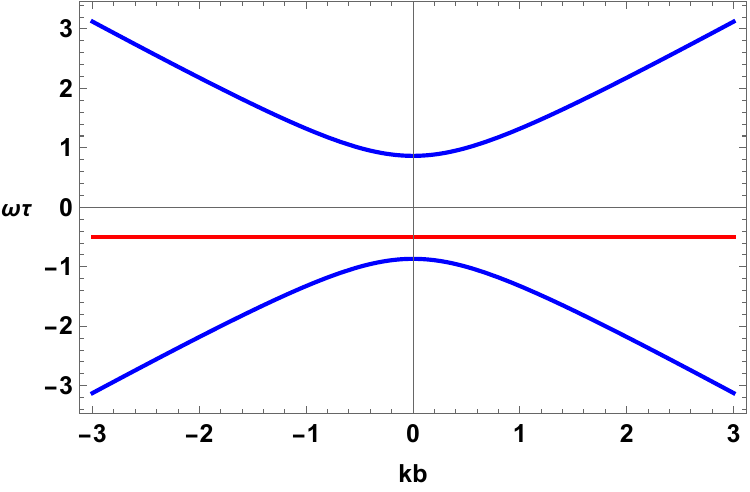}
\caption{Dispersion relations $\omega(k)$ of the longitudinal modes of a conductor with $\omega_p\tau=0.3$ (left panel) or $1$ (right panel), modeled according to Israel-Stewart theory, see equation \eqref{islongitud}. The blue lines are the real parts and the red lines are the imaginary parts. The regime of applicability is defined by the condition $k\tau\ll 1$, which in the plot corresponds to $kb\ll b/\tau$. }
    \label{fig:longwave}
\end{figure}

\vspace{-0.3cm}
\section{Conclusions}
\vspace{-0.3cm}

We have shown that the Israel-Stewart theory of electric conduction, when coupled to Maxwell’s equations, exhibits a substantially broader predictive scope than traditionally assumed. Although conceived as a causal extension of resistive magnetohydrodynamics, the framework reproduces not only the hydrodynamic response of a conducting medium, but also the nonhydrodynamic plasma oscillations encoded in the underlying kinetic theory, in agreement with the expectations of \cite{Denicol_Relaxation_2011,Wagner:2023jgq}. By tracing the nonhydrodynamic sector of Israel-Stewart MHD back to the Boltzmann-Vlasov equations, we established that the oscillatory modes identified in \cite{Dash:2022xkz} are genuine physical plasmons, rather than artifacts of the effective model: they possess the correct frequency, damping rate, and algebraic multiplicity.

The Onsager-Casimir analysis \cite{Onsager_Casimir} provides a unifying thermodynamic interpretation of this result. At zero wavenumber, kinetic theories built solely from PT-even variables admit only purely relaxational modes. However, once the electromagnetic field is incorporated (which is PT-odd), antisymmetric couplings necessarily appear, giving rise to one (and only one) pair of oscillating non-hydrodynamic modes per spatial direction. The Israel-Stewart equations also respect this symmetry structure \cite{GavassinoSymmetric2022nff,GavassinoUniveraalityI2023odx}, and this is why they succeed in reproducing the plasmon physics of the microscopic theory, even though they were not explicitly designed for that purpose.

\newpage
These findings substantially broaden the domain of applicability of Israel–Stewart \textit{magneto}hydrodynamics \cite{Dash:2022xkz,GavassinoShokryMHD:2023qnw}. In particular, the fact that its nonhydrodynamic sector corresponds to physical plasmons implies that the theory remains predictive on timescales of order $\omega_p^{-1}$, far beyond the narrow regime accessible to strictly hydrodynamic frameworks (such as BDNK-type theories \cite{GavassinoThermoelectric2025bnx}). When combined with the ability of Israel–Stewart MHD to capture the correct electrostatic response to gravity and acceleration \cite{Gavassino:2025wcy}, this result firmly establishes the framework as a reliable tool for modeling electromagnetic processes in astrophysical plasmas and heavy-ion collisions.

\section*{Acknowledgements}

This work is supported by a MERAC Foundation prize grant,  an Isaac Newton Trust Grant, and funding from the Cambridge Centre for Theoretical Cosmology.

\appendix
\section{Quick derivation of the Onsager-Casimir relations}\label{apponsager}

First of all, let us note that, if \eqref{identif} holds, then the equal-time correlator matrix of $\Psi$ is just
\begin{equation}\label{a1}
\langle \Psi(0) \Psi^T(0)\rangle =\dfrac{\int e^{-\Psi^T\Psi/2}\, \Psi \Psi^T \, \mathcal{D}\Psi}{\int e^{- \Psi^T \Psi/2} \, \mathcal{D}\Psi}= 1 \, ,
\end{equation}
where ``$1$'' stands for the identity matrix in $\Psi-$space. Next, let us focus on the non-equal time correlator. Since the equilibrium state is invariant under time translations, we have
\begin{equation}\label{psittt}
\langle \Psi(t) \Psi^T(0)\rangle=\langle \Psi(t{-}t) \Psi^T(0{-}t)\rangle=\langle \Psi(0) \Psi^T(-t)\rangle \, .  
\end{equation}
Moreover, invariance of the theory under $\epsilon$ just means that, if a \textit{microscopic} state $X$ evolves into $Y$ after a time $t$, then the transformed state $\epsilon Y$ evolves into $\epsilon X$ in the same time $t$. In one formula, $\epsilon\Phi^{t}=\Phi^{-t} \epsilon$ \cite{Carbone2020}, where $\Phi^t$ is the time-evolution diffeomorphism\footnote{Precisely the same identity holds in a quantum setting. If $\hat{\epsilon}$ is the anti-unitary operator corresponding to $\epsilon$ in the quantum theory, we have that $\Hat{\epsilon}\,e^{-i\Hat{H}t}=\Hat{\epsilon}\,e^{-i\Hat{H}t}\,\Hat{\epsilon}^{-1}\Hat{\epsilon}=e^{-\Hat{\epsilon}i\Hat{H}\Hat{\epsilon}^{-1} t}\Hat{\epsilon}=e^{i\Hat{H} t}\Hat{\epsilon}$, where we used the well-known identity $\hat{\epsilon}i\Hat{H}\hat{\epsilon}^{-1}=-i\hat{H}$, valid for any Hamiltonian that is invariant under the action of $\epsilon$ \cite[\S 2.6]{weinbergQFT_1995}.}. Then, we have the following chain of identities:
\begin{equation}\label{epsiiii}
\begin{split}
\langle \Psi(0) \Psi^T(-t)\rangle ={}& \int \mathcal{P}(X)\Psi(X)\Psi^T(\Phi^{-t} X) dX \\
={}& \int \mathcal{P}(\epsilon Y)\Psi^T(\epsilon Y)\Psi(\Phi^{-t} \epsilon Y) |\text{det}(\epsilon)|dY \\
={}& \int \mathcal{P}(Y)\Psi(\epsilon Y)\Psi^T(\epsilon\Phi^{t} Y) dY \\
={}& \int \mathcal{P}(Y)\epsilon\Psi( Y)\Psi^T(\Phi^{t} Y)\epsilon^T dY \\
={}& \epsilon\langle \Psi(0) \Psi^T(t)\rangle \epsilon^T \, . \\
\end{split}
\end{equation}
In the first line, we expressed the thermal average as the integral over the microscopic phase space with the appropriate ensemble probability $\mathcal{P}(X)$. In the second line, we made the change of variables $X=\epsilon Y$. In the third line, we invoked the invariance of the equilibrium state under $\epsilon$, i.e. $\mathcal{P}(\epsilon Y)=\mathcal{P}(Y)$, the symmetry relation $\Phi^{-t} \epsilon=\epsilon\Phi^{t}$, and the property $\epsilon^2=1$, which implies $|\det \epsilon|= 1$. In the fourth line, we used the fact that the transformation $\Psi \rightarrow \epsilon \Psi$ is \textit{defined} by taking $X\rightarrow \epsilon X$ and looking at its effect on the observables, i.e. $\epsilon \Psi(X)\equiv \Psi(\epsilon X)$ (``pull-back''). Combining \eqref{psittt} and \eqref{epsiiii}, we arrive at
\begin{equation}
\langle \Psi(t) \Psi^T(0)\rangle=  \epsilon\langle \Psi(0) \Psi^T(t)\rangle \epsilon^T \, , 
\end{equation}
which is also valid in a quantum setting \cite{kubo1957statistical}. Finally, Onsager's regression hypothesis allows us to differentiate with respect to $t$ at $t=0^+$, and to use \eqref{dotpsiLpsi} inside the average sign, giving
\begin{equation}
\mathcal{L} \langle \Psi(0) \Psi^T(0)\rangle=  \epsilon\langle \Psi(0) \Psi^T(0)\rangle \mathcal{L}^T \epsilon^T \, .
\end{equation}
Recalling equation \eqref{a1}, we finally obtain equation \eqref{Onsager}.  

\section{The Drude model fulfills Onsager symmetry}\label{drudeonsager}

First of all, we need a formula for $\Delta\Omega$. Applying information-current techniques to a kinetic theory of Fermions \cite{GavassinoGibbs2021,GavassinoCausality2021,SoaresRochaGavassinoKIneticFluctu2024afv}, one finds that
\begin{equation}\label{Omegadrude}
\Delta\Omega = \dfrac{V}{2}\int \dfrac{2d^3 p}{(2\pi)^3} \dfrac{(\delta f)^2}{f_{\text{eq}}(1-f_\text{eq})} +\dfrac{V}{T} \left(\dfrac{E^j E_j}{2}+\dfrac{B^j B_j}{2} \right)\, ,
\end{equation}
where $V$ is the volume, $T$ is the temperature, $f_{\text{eq}}$ is the equilibrium Fermi-Dirac distribution, and $\delta f$ is the linear deviation of $f$ from $f_{\text{eq}}$. In the following, we will set our unit of length so that $V/T=1$. Now, let $\{g_n(p^k)\}_{n=0}^{\infty}$ be a basis of functions from $\mathbb{R}^3$ to $\mathbb{R}$ that are orthonormal relative to the inner product defined by $\Delta \Omega$, i.e.
\begin{equation}
T \int \dfrac{2d^3 p}{(2\pi)^3} \dfrac{g_m g_n}{f_{\text{eq}}(1-f_\text{eq})} =\delta_{mn} \, ,
\end{equation}
and expand $\delta f$ on this basis, $\delta f=\sum_n \Psi^n g_n$, where $\Psi^n$ are some linear-combination coefficients. Then, \eqref{Omegadrude} becomes
\begin{equation}
\Delta\Omega =\sum_{n} \dfrac{(\Psi^n)^2}{2}  + \dfrac{E^j E_j}{2}+\dfrac{B^j B_j}{2} \, .
\end{equation}
This shows that the infinite vector $\Psi=(\Psi^1,\Psi^2,...)^T$ serves as an appropriate choice of primary degrees of freedom for the application of the Onsager principle. Let us now write the equations of motion of the Boltzmann-Vlasov-Maxwell system linearised around a (non-magnetised) equilibrium state, also accounting for the Lorentz force:
\begin{equation}
\begin{split}
\delta \Dot{f}={}& \dfrac{\langle \delta f\rangle -\delta f}{\tau} +e[E^j+(v{\times} B)^j] \dfrac{\partial f_{\text{eq}}}{\partial p^j}\, , \\ 
\Dot{E}^j ={}& e \int \dfrac{2d^3 p}{(2\pi)^3} \,  \delta f \, v^j \, ,\\
\Dot{B}^j ={}& 0 \, .\\
\end{split}
\end{equation}
Expanding $\delta f$ in the basis $g_n$, and invoking the Fermi-Dirac identity $\partial f_{\text{eq}}/\partial p^j=-f_{\text{eq}}(1-f_\text{eq}) v_j/T$, we obtain
\begin{equation}
\begin{split}
\sum_n \Dot{\Psi}^n g_n={}& \sum_n \dfrac{\langle g_n\rangle -g_n}{\tau} \Psi^n -eE^j\dfrac{ v_j}{T} f_{\text{eq}}(1-f_\text{eq}) \, , \\ 
\Dot{E}^j ={}& \sum_n \Psi^n e \int \dfrac{2d^3 p}{(2\pi)^3} \,  g_n \, v^j \, ,\\
\Dot{B}^j ={}& 0 \, .\\
\end{split}
\end{equation}
Note that the magnetic field has disappeared from the Vlasov term, as expected. Finally, let us take the inner product of the first line with $g_m$, and we arrive at
\begin{equation}
\begin{split}
\Dot{\Psi}^m ={}& \sum_n T \int \dfrac{2d^3 p}{(2\pi)^3} \dfrac{\langle g_m\rangle \langle g_n\rangle -g_m g_n }{f_{\text{eq}}(1-f_\text{eq})\tau}  \Psi^n -eE^j \int \dfrac{2d^3 p}{(2\pi)^3} g_m v_j  \, , \\ 
\Dot{E}^j ={}& \sum_n \Psi^n e \int \dfrac{2d^3 p}{(2\pi)^3} \,  g_n \, v^j \, ,\\
\Dot{B}^j ={}& 0 \, ,\\
\end{split}
\end{equation}
which is in perfect agreement with \eqref{langle}, under the following identifications:
\begin{equation}
\begin{split}
\mathcal{L}_{mn}={}& T \int \dfrac{2d^3 p}{(2\pi)^3} \dfrac{(g_m{-}\langle g_m\rangle) (g_n{-} \langle g_n\rangle)}{f_{\text{eq}}(1-f_\text{eq})\tau} \spc (\text{as expected, }\mathcal{L}=\mathcal{L}^T\text{ and }\mathcal{L}\succeq 0)\, , \\
\mathcal{J}^n_j ={}& -e  \int \dfrac{2d^3 p}{(2\pi)^3} g_n v_j  \, . \\
\end{split}
\end{equation}
As a last consistency check, let us verify that the identity $\omega_p^2=\sum_n(\mathcal{J}_1^n)^2$ holds. To this end, let us first decompose $\partial f_{\text{eq}}/\partial p^1$ on the $g_n$ basis:
\begin{equation}
\dfrac{\partial f_{\text{eq}}}{\partial p^1} = \sum_n g_n  T\int \dfrac{2d^3 p'}{(2\pi)^3} \dfrac{g_n'}{f'_{\text{eq}}(1-f_\text{eq}')} \dfrac{\partial f'_\text{eq}}{\partial p'^1} = - \sum_n g_n  \int \dfrac{2d^3 p'}{(2\pi)^3} g_n'  v'_1=\dfrac{1}{e}\sum_n \mathcal{J}_1^n g_n\, .
\end{equation}
Then, we have that
\begin{equation}
\omega_p^2 = \dfrac{e^2}{m} n_e =\dfrac{e^2}{m}\int \dfrac{2d^3 p}{(2\pi)^3} f_{\text{eq}}=e^2\int \dfrac{2d^3 p}{(2\pi)^3} f_{\text{eq}}\dfrac{\partial v^1}{\partial p^1}=-e^2\int \dfrac{2d^3 p}{(2\pi)^3} \dfrac{\partial f_{\text{eq}}}{\partial p^1} v^1=-\sum_n \mathcal{J}_1^n e\int \dfrac{2d^3 p}{(2\pi)^3} g_n v^1=\sum_n (\mathcal{J}_1^n)^2 \, .
\end{equation}

\section{Consequences of isotropy}\label{appisotropy}

\begin{theorem}
If the background equilibrium state is isotropic, any solution $\Psi_{\text{BVM}}(t)$ of \eqref{langle} can be decomposed as the sum of three solutions $\Psi_{\text{BVM}}^{(j)}(t)$ such that $E_k^{(j)}(t)\propto \delta^j_k$.
\end{theorem}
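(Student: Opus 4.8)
The plan is to exploit the rotational invariance of the isotropic equilibrium, encoded in the fact that the evolution \eqref{langle} is covariant under $SO(3)$. Writing the full state as $\Psi_{\text{BVM}}=(\Psi,E^j,B^j)^T$, a rotation $R\in SO(3)$ acts linearly as $\hat{R}=\mathrm{diag}(D(R),R,R)$, where $D(R)$ is the (orthogonal) representation carried by the moments $\Psi$ and $R$ acts on the vector indices of $E$ and $B$. Isotropy means that $\mathcal{L}$ commutes with $D(R)$ and that the fixed vectors $\mathcal{J}_j$ transform covariantly, $D(R)\mathcal{J}_j=\mathcal{J}_k R^k{}_j$; equivalently, the evolution operator of \eqref{langle} commutes with every $\hat{R}$. (In the Drude realization of Appendix \ref{drudeonsager} this is manifest, since $\mathcal{L}_{mn}$ and $\mathcal{J}^n_j$ are built from the isotropic weight $f_{\text{eq}}(1-f_{\text{eq}})$.) Consequently $\hat{R}$ maps solutions to solutions, and so does any linear combination of such operators.

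Next I would restrict attention to the Klein four-group $V=\{\mathbb{1},R_1,R_2,R_3\}$ generated by the three $\pi$-rotations about the coordinate axes, $R_1=\mathrm{diag}(1,-1,-1)$, $R_2=\mathrm{diag}(-1,1,-1)$, $R_3=\mathrm{diag}(-1,-1,1)$. These are \emph{proper} rotations, so only isotropy (not parity invariance) is needed. The group $V$ has four one-dimensional characters: the trivial one $\chi^{(0)}\equiv 1$, and three non-trivial ones $\chi^{(j)}$ fixed by $\chi^{(j)}(\mathbb{1})=1$, $\chi^{(j)}(R_j)=+1$, and $\chi^{(j)}(R_a)=-1$ for $a\neq j$. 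The crucial observation is that the electric-field components sit in distinct characters: $E^j$ is even under $R_j$ and odd under the other two, i.e. it transforms according to $\chi^{(j)}$. The same is true of the fixed vectors $\mathcal{J}_j$, which therefore live entirely in the $\chi^{(j)}$-isotypic subspace of moment space; by contrast, $\Psi$ may also possess a $\chi^{(0)}$-component (the rotational scalars), while $E$ and $B$ have none.

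I would then introduce the spectral projectors $\Pi^{(\mu)}=\tfrac14\sum_{R\in V}\chi^{(\mu)}(R)\,\hat{R}$ for $\mu=0,1,2,3$, which are mutually orthogonal, sum to the identity, and — being linear combinations of the symmetries $\hat{R}$ — commute with the evolution. Hence each $\Pi^{(\mu)}\Psi_{\text{BVM}}(t)$ is again a solution of \eqref{langle}. For $\mu=j\in\{1,2,3\}$, orthogonality of the character table shows that the $E$-block of $\Pi^{(j)}$ reduces to $\mathrm{diag}(\delta^{j1},\delta^{j2},\delta^{j3})$, so $\Pi^{(j)}\Psi_{\text{BVM}}$ has $E_k\propto\delta^j_k$; the leftover piece $\Pi^{(0)}\Psi_{\text{BVM}}$ carries $E=B=0$ (a purely relaxational solution of $\dot\Psi=-\mathcal{L}\Psi$). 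Setting
\begin{equation}
\Psi_{\text{BVM}}^{(1)}=\Pi^{(1)}\Psi_{\text{BVM}}+\Pi^{(0)}\Psi_{\text{BVM}},\qquad \Psi_{\text{BVM}}^{(2)}=\Pi^{(2)}\Psi_{\text{BVM}},\qquad \Psi_{\text{BVM}}^{(3)}=\Pi^{(3)}\Psi_{\text{BVM}},
\end{equation}
gives three solutions, each with electric field along a single axis, whose sum is $(\Pi^{(0)}+\Pi^{(1)}+\Pi^{(2)}+\Pi^{(3)})\Psi_{\text{BVM}}=\Psi_{\text{BVM}}$.

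The main subtlety I anticipate is the bookkeeping of the trivial sector $\chi^{(0)}$: since $V$ has four irreducible characters but space has only three directions, one must check that no electric field hides in $\chi^{(0)}$ (true, because the vector representation of $V$ contains no trivial subrepresentation) and then freely assign the surviving scalar part of $\Psi$ — together with the decoupled magnetic field — to any one of the three pieces, which is harmless since it adds only a state with $E=0$. The one genuinely substantive input is the covariance relation $D(R)\mathcal{J}_j=\mathcal{J}_k R^k{}_j$, which is where the isotropy hypothesis does its work; everything else is the representation theory of the Klein four-group.
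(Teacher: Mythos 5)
Your proof is correct and is essentially the paper's own argument in representation-theoretic dress: the character projectors $\Pi^{(\mu)}$ of the Klein four-group reproduce exactly the paper's explicit linear combinations (e.g. $\Pi^{(0)}+\Pi^{(1)}=\tfrac12 (1+\hat{R}_1)$ is precisely the paper's $\Psi_{\text{BVM}}^{(1)}$, and $\Pi^{(2)},\Pi^{(3)}$ are its $\Psi_{\text{BVM}}^{(2)},\Psi_{\text{BVM}}^{(3)}$), with the same key input that $180^{\text{o}}$ rotations map solutions to solutions and the same handling of the trivial sector.
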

\begin{proof}
Since the magnetic field decouples from the other variables, we will set it to zero for clarity.

Let
\begin{equation}
\Psi_{\text{BVM}}(t)= \left\{\Psi(t),E^x(t),E^y(t),E^z(t)\right\}^T 
\end{equation}
be an arbitrary solution of \eqref{langle}. If the background equilibrium state is isotropic, then any rotated version of $\Psi_{\text{BVM}}(t)$ is also a solution of \eqref{langle}. Hence, let $\Psi_{\text{BVM}}^{(x)}(t)$, $\Psi_{\text{BVM}}^{(y)}(t)$, and $\Psi_{\text{BVM}}^{(z)}(t)$ be the solutions arising from applying 180$^{\text{o}}$-rotations around the $x$, $y$, and $z$, respectively, to $\Psi_{\text{BVM}}(t)$. Clearly, we have that
\begin{equation}
\begin{split}
\Psi_{\text{BVM}}^{(x)}(t)={}& \left\{\Psi^{(x)}(t),+E^x(t),-E^y(t),-E^z(t)\right\}^T \, , \\
\Psi_{\text{BVM}}^{(y)}(t)={}& \left\{\Psi^{(y)}(t),-E^x(t),+E^y(t),-E^z(t)\right\}^T \, , \\
\Psi_{\text{BVM}}^{(z)}(t)={}& \left\{\Psi^{(z)}(t),-E^x(t),-E^y(t),+E^z(t)\right\}^T \, . \\
\end{split}
\end{equation}
But since the system \eqref{langle} is linear, any linear combination of $\Psi_{\text{BVM}}(t)$, $\Psi_{\text{BVM}}^{(x)}(t)$, $\Psi_{\text{BVM}}^{(y)}(t)$, and $\Psi_{\text{BVM}}^{(z)}(t)$ is also a solution. Thus, we can define the following three new solutions of \eqref{langle}:
\begin{equation}
\begin{split}
\Psi_{\text{BVM}}^{(1)}(t)={}& \dfrac{\Psi_{\text{BVM}}(t)+ \Psi_{\text{BVM}}^{(x)}(t)}{2} \, , \\
\Psi_{\text{BVM}}^{(2)}(t)={}& \dfrac{\Psi_{\text{BVM}}(t)- \Psi_{\text{BVM}}^{(x)}(t)+\Psi_{\text{BVM}}^{(y)}(t)- \Psi_{\text{BVM}}^{(z)}(t)}{4} \, , \\
\Psi_{\text{BVM}}^{(3)}(t)={}& \dfrac{\Psi_{\text{BVM}}(t)- \Psi_{\text{BVM}}^{(x)}(t)-\Psi_{\text{BVM}}^{(y)}(t)+ \Psi_{\text{BVM}}^{(z)}(t)}{4} \, . \\
\end{split}
\end{equation}
It is immediate to see that $\Psi_{\text{BVM}}(t)=\Psi_{\text{BVM}}^{(1)}(t)+\Psi_{\text{BVM}}^{(2)}(t)+\Psi_{\text{BVM}}^{(3)}(t)$. Furthermore,
\begin{equation}
\begin{split}
\Psi_{\text{BVM}}^{(1)}(t)={}& \left\{\Psi^{(1)}(t),E^x(t),0,0\right\}^T \, , \\
\Psi_{\text{BVM}}^{(2)}(t)={}& \left\{\Psi^{(2)}(t),0,E^y(t),0\right\}^T \, , \\
\Psi_{\text{BVM}}^{(3)}(t)={}& \left\{\Psi^{(3)}(t),0,0,E^z(t)\right\}^T \, . \\
\end{split}
\end{equation}
This completes our proof.    
\end{proof}
\textbf{Remark:} Note that, in applications to solids, the assumption of isotropy can be significantly relaxed. Indeed, our proof only relies on invariance under 180$^{\text{o}}$-rotations about the coordinate axes. Therefore, the result also holds for materials with an underlying cubic lattice structure, as long as the coordinate axes are aligned with the lattice edges.

\bibliography{Biblio}

\label{lastpage}
\end{document}